\title{Customer Sharing in Economic Networks with Costs}
\author{
	Bin Li$^1$,
	Dong Hao$^1$,
	Dengji Zhao$^2$\and Tao Zhou$^1$
	\\
	$^1$University of Electronic Science and Technology of China, Chengdu, China\\
	$^2$ ShanghaiTech University, Shanghai, China
	\\libin@std.uestc.edu.cn, haodong@uestc.edu.cn, zhaodj@shanghaitech.edu.cn, zhutou@ustc.edu
}
\begin{document}

\maketitle

\begin{abstract}  
	In an economic market, sellers, infomediaries and customers constitute an economic network. Each seller has her own customer group and the seller's private customers are unobservable to other sellers. Therefore, a seller can only sell commodities among her own customers unless other sellers or infomediaries share her sale information to their customer groups. However, a seller is not incentivized to share others' sale information by default, which leads to inefficient resource allocation and limited revenue for the sale. To tackle this problem, we develop a novel mechanism called customer sharing mechanism (CSM) which incentivizes all sellers to share each other's sale information to their private customer groups. Furthermore, CSM also incentivizes all customers to truthfully participate in the sale. In the end, CSM not only allocates the commodities efficiently but also optimizes the seller's revenue.
\end{abstract}
\section{Introduction}
In many markets such as online shopping platforms, there are multiple sellers and many potential customers. Each seller has her own customer group and different seller's customer groups may not overlap. The interactive relationships among these sellers and customer groups form an economic network. For example, in online search engines or social media platforms (e.g., Google, Twitter, WordPress), it is common to see various ads from some commodity sellers. These social media platforms usually have their own customers which could be the registered users or those who often browse these platforms. By advertising in these platforms, a seller could enlarge her selling quantity and at the same time, the platforms also increase its profit from these paid advertisements.
Supply chain system is an another example, every dealer in the system serves a local customer group that is geographically close to her and usually if a dealer is not a terminal retail trader, she may have secondary dealers. Based on commercial contracts, the dealers in the system form an economic network where each edge represents the relationship between supply and marketing.
Other similar economic interactions also emerge in logistics, routing and job-hunting networks.
When one seller (or advertiser, dealer) wants to sell her own commodity for a good price, without inviting other sellers to become mediators for her and diffuse the sale information in the economic network, usually she can only sell among customers that she can directly inform, although it would be very likely that a high price bidder hides in another seller's customer group. As a consequence, this
results in a locally optimal allocation and the seller's revenue cannot be globally maximized. Such a sub-optimal allocation scenario is very common in many real-world markets. Even in non-competitive markets where the sellers are not rivals to each other, a seller's local sale information is hard to be shared by other sellers.


The underlying reason for such inefficiency is mainly due to the fact that sharing others' sale information could be costly.
The information sharing action itself is usually costless, for example, a sharer can do it by posting the sales information via twitter or facebook. However, a more important cost could potentially exist. In the economic networks, if a high bid buyer is not a member of the seller's direct customer group, to find this buyer and complete a sale with a globally optimized revenue, a trading chain from the seller to the winning buyer via one or multiple information sharers must be established.
Every seller or infomediary along the chain could have a cost, for example the costs for shipping the commodity. For intermediate sellers who are selling the same or similar commodities, this cost could be their potential customer loss; for those noncompetitive intermediate sellers or infomediaries, the costs could simply be monetary commission demands, the information handling fee or transportation/labor costs for shipping the commodity.
It is important that such costs should be covered by a mechanism if a seller wishes to diffuse her sale information among the economic network. However, as far as we know, it is of particular difficulty for the existing market mechanisms to incentivize sellers with costs to share others' sale information. As a result, the highest bidder in the network cannot be discovered and the globally optimal allocation cannot be achieved.

In this paper, we investigate the above customer sharing problem under an economic network setting in the view of mechanism design. 
Taking into consideration the transaction costs, we propose two mechanisms which can be used to incentivize the sellers to become mediators and diffuse other sellers' sale information.
The first mechanism is an extension of the recently proposed information diffusion mechanism (IDM) \cite{li2017mechanism}. We show that if the economic network forms a tree structure,
the extended mechanism is individually rational (IR), incentive compatible (IC), budget balanced (BB) and efficient. Nevertheless, the extended mechanism fails to work in general graphs. Therefore, we further develop a novel mechanism called customer sharing mechanism (CSM). We prove that CSM is IR, IC, BB and efficient in general cases and the revenue generated by CSM is always higher than that given by the Vickrey auction.

There exists a rich body of work studying mechanism design problems that relates to networks. Nevertheless, the structure of involved networks are usually assumed to be fixed and prior known to the designer and problems raised are also essentially different from ours, for instance the cost-sharing problem for multicast transmissions considered in \cite{Moulin2001Strategyproof,feigenbaum2001sharing}, the frugal path problem investigated in \cite{archer2007frugal,elkind2004frugality} and products procurements studied in supply networks \cite{babaioff2009mechanisms,chen2005efficient}.
In terms of promoting information spreading in networks, previous works can be divided into two categories in general. For non-strategic agents settings, agents act according to the predefined physics rules and the aim is to investigate the scale of diffusion under various rules and target influential nodes that can trigger large diffusion cascades in social networks \cite{kempe2003maximizing,rogers2010diffusion,pastor2015epidemic}. In settings with strategic agents, the most widely studied scenario is to incentivize agents to invite more people to accomplish a challenge together$-$for a few examples, see \cite{kleinberg2005query,leskovec2007dynamics,pickard2011time,emek2011mechanisms,cebrian2012finding}. However, none of them considered an auction setting and amount of money is needed to compensate the sharers.
A most related work is \cite{li2017mechanism} where authors proposed an auction mechanism which not only incentivizes agents to propagate the sale information to their neighbours in a social network, but also improves the allocation efficiency and seller's revenue comparing with the Vickrey auction.

The structure of the paper is organized as follows. Section 2 describes the model of the customer sharing problem. Section 3 gives a solution in tree-structure economic networks based on \cite{li2017mechanism}. To tackle the problem in general scenarios, we design a novel mechanism called customer sharing mechanism (CSM) and analyze its properties in Section 4. Finally, we conclude in Section 5.

\section{The Model}
\newtheorem{defn}{Definition}
\newtheorem{prop}{Proposition}
\newtheorem{lemma}{Lemma}
\newtheorem{theorem}{Theorem}
\newtheorem{corollary}{Corollary}
\newtheorem{exmp}{Example}


Consider a \textit{seller} $s$ selling a single commodity in an economic network. Besides the seller, the economic network includes $n$ agents denoted by $N=\{1,2,\cdots,n\}$, which are divided into two categories: \textit{buyers} (customers) who are interested in buying this commodity and \textit{intermediate nodes} who are other sellers or infomediaries. Each intermediate node can only directly link to her own buyer group and some other intermediate nodes. The buyer groups of intermediate nodes may not overlap. We assume the buyers do not communicate with each other, i.e., there are no links between the buyers.

Each buyer $i\in N$ assigns a private value $b_i\geq 0$ to the commodity $-$ the maximum amount she is willing to pay for the commodity. Each intermediate node $i$ possesses a private neighbour set $r_i\subseteq N$ with whom $i$ can directly exchange the sale information. Furthermore, an intermediate node incurs a cost $c_i$ when a trade passes through her hands. We assume cost $c_i$ is known to the seller once intermediate node $i$ participates in the auction. 
Initially, only the seller's neighbours $r_s$ are informed of the sale.

Figure \ref{trading_network} depicts two examples of the economic network. The value inside each circle is the buyer's private valuation for the commodity. The value (except the seller $s$) inside each square is the intermediate node's cost.
For instance, intermediate node $B$ suffers a loss of $10$ if the winner is buyer $F$ since a trade from the seller to buyer $F$ must pass through $B$'s hands. 



\begin{figure}[t]
	\centering
	\includegraphics[width=3.3in]{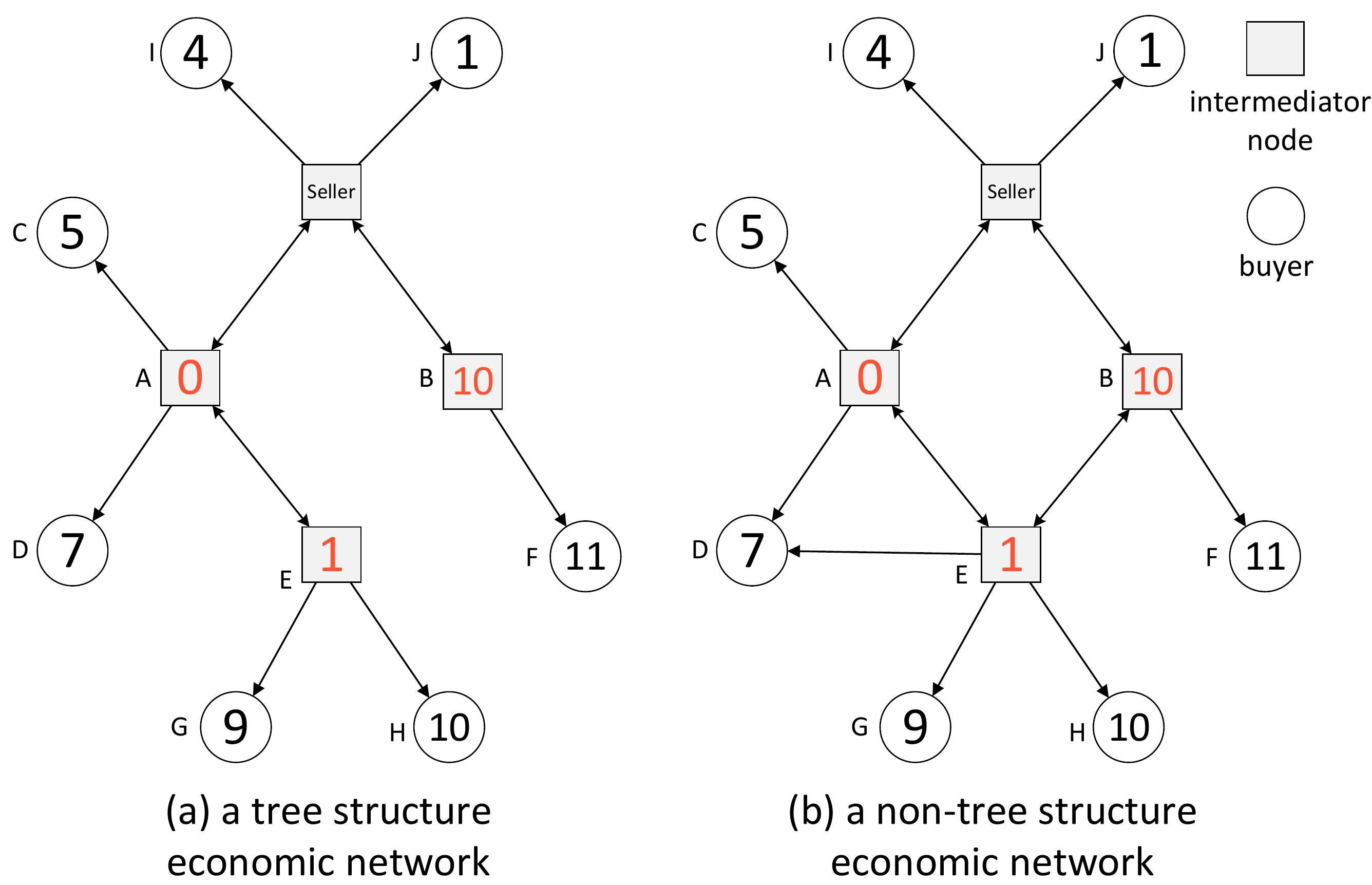}
	\caption{Two instances of the economic network.}\label{trading_network}
\end{figure}

Formally, let $t_i$ be the \textit{private type} of agent $i$. In our setting, we have 
\begin{equation*}
t_i=
\begin{cases}
b_i &\text{if $i$ is a buyer,}\\
r_i &\text{if $i$ is an intermediate node.}\\
\end{cases}
\end{equation*}
Denote $t=(t_1,\cdots,t_n)$ as the type profile of all agents and $t_{-i}$ be the type profile of all agents except $i$. We have $t=(t_i,t_{-i})$. Let $T_i$ be the type space of agent $i$ and $T$ be the type profile space of all agents. Specifically, $T_i=\mathcal{P}(N)$ for any intermediate node $i$ where $\mathcal{P}(N)$ is the power set of $N$, $T_i=\mathbb{R}_{\geq 0}$ for any buyer $i$ and $T=\times T_{i\in N}$.

The mechanism asks each agent, who is aware of the sale, to act according to her true type. Denote $t_i'=b_i'\in T_i$ as the reported type of buyer $i$ and $t'_i=r_i'\in T_i$ as intermediate node $i$'s reported type which means $i$ shares the sale information to her neighbours in $r_i'$. The reported type of intermediate node $i$ is limited to $\mathcal{P}({r_i})\subseteq T_i$ as she is not aware of others who are not her neighbours. Let $t'=(t_1',\cdots,t_n')$ where $t'_i=nil$ if $i$ has never been informed of the auction or $i$ does not want to participate.

\begin{defn}
	Given a type report profile $t'$ of all agents, a \textit{trading chain} from seller $s$ to agent $i$ is a sequence of agents $(a_1,\cdots,a_k,a_{k+1},\cdots,a_p,i)$ such that $a_1\in r_s$ and for $1<l\leq p$, $a_l\in r_{a_{l-1}}',i\in r_{a_p}'$ and no agent appears twice in the sequence, i.e., it is a simple path from seller $s$ to agent $i$.
\end{defn}
A trading chain demonstrates how a commodity routes in the economic network from $s$ to $i$ through each intermediary $a_k$ successively. To completing a trade in an economic network, the commodity has to go through one or more intermediate nodes if the winner is not a member of the seller's direct neighbours. Correspondingly, the \textit{transaction costs} of the trade is the total costs incurred by these intermediate nodes. 
Since an agent who is not informed of the sale cannot participate in the auction, not all type report profiles are feasible.

\begin{defn}
	We say a type report profile $t'$ is feasible if for all $i\in N$, $t_i'\neq nil$ iff there exists a trading chain from seller $s$ to $i$ following the type reports of $t_{-i}^\prime$. 
\end{defn}

A feasible type report profile reflects a practical information sharing process where seller $s$ is the initial sharing agent. In what follows, we only design auction mechanisms under feasible type report profiles. 
\begin{defn}
	Given a feasible type report profile $t'$, for agent $i$ with $t_i'\neq nil$, denote $LCC_i$ as the trading chain with the lowest transaction costs from seller $s$ to $i$, i.e., $LCC_i={\arg\min}_{l_i\in \mathcal{L}_i}{\sum_{j\in l_i\setminus\{i\}}{c_i}}$ where $\mathcal{L}_i$ is the set of all trading chains from seller $s$ to $i$.
\end{defn}
For buyer $i$, $LCC_i$ represents the most economic path when trading a commodity from the seller to $i$ through intermediate nodes.
\begin{defn}
	An \textit{auction mechanism} $\mathcal{M}=(\pi,x)$ has the following components: an \textit{allocation policy} $\pi = \{\pi_i\}_{i\in N}$ and a \textit{payment policy} $x=\{x_i\}_{i\in N}$, where $\pi_i:T\rightarrow \{-1,0,1\}$ and $x_i:T\rightarrow \mathbb{R}$ are the allocation and payment policy for $i$ respectively.
\end{defn}

Given a feasible type report profile $t'$, $\pi_i(t') = 1$ means $i$ wins the commodity, otherwise $\pi_i(t') = -1$ if $i$ is assigned to the trading chain to the winner and $\pi_i(t') = 0$ if $i$ is neither the winner nor an agent in the trading chain. $x_i(t') \geq 0$ indicates that $i$ pays the auctioneer $x_i(t')$, and $i$ receives $|x_i(t')|$ from the auctioneer if $x_i(t') < 0$. We say $\pi$ is a \textit{feasible allocation} if for all $t' \in T$, the mechanism only chooses at most one buyer as the winner and the agents with $\pi_i(t') \neq 0 $ form a trading chain from seller $s$ to the winner. 
Hereinafter we only consider feasible allocations. 

Let $v_i(t_i,\pi(t'))$ be the valuation function of agent $i$ under feasible allocation $\pi(t')$ which denotes her value about the outcome $\pi(t')$ without considering her payment. In our setting, 
\begin{equation*}
v_i(t_i,\pi(t'))=
\begin{cases}
b_i &\text{$\pi_i(t')=1$,}\\
-c_i &\text{$\pi_i(t')=-1$,}\\
0 &\text{$\pi_i(t')=0$.}
\end{cases}
\end{equation*}
Notice that an intermediate node with $\pi_i(t')=-1$ means that she is a member of the selected trading chain to the winner. Since a cost $c_i$ is incurred when trading the commodity, her value about the allocation is $-c_i$ without considering her payment.

Given a feasible allocation $\pi(t')$, the \textit{social welfare} of the allocation is defined as $W(\pi(t'))=\sum_{i\in N}v_i(t'_i,\pi(t'))$. Because only agents in the selected trading chain have non-zero valuations, we have $W(\pi(t'))=\sum_{i\in TC_w}v_i(t'_i,\pi(t'))$ where $w$ is the winner and $TC_w$ is the selected trading chain to $w$.
\begin{defn}
	An allocation $\pi$ is \textit{efficient} if for any feasible type report profile $t'$,
	$\pi \in {\arg\max}_{\pi' \in \Pi} W(\pi'(t'))$
	where $\Pi$ is the set of all feasible allocations. Denote $W^*(\pi(t'))$ as the social welfare under efficient allocation $\pi$.
\end{defn}
Equivalently, an auction mechanism $(\pi,x)$ with efficient allocation allocates the commodity to buyer $m={\arg\max}_{i\in N}\sum_{j\in LCC_i}v_j(t'_j,\pi(t'))$ and selects $LCC_m$ as the trading chain. In the following, denote the expression $\sum_{j\in LCC_i}v_j(t'_j,\pi(t'))$ by $SW_i$.

Given agent $i$ of truthful type $t_i$, a feasible type report profile $t'$ and an auction mechanism $(\pi, x)$, the \textit{utility} of $i$ under the allocation $\pi(t')$ and the payment $x(t')$ is quasilinear and is defined as: $u_i(t_i, t', (\pi,x)) = v_i(t_i,\pi(t')) - x_i(t')$.

A mechanism is individually rational if for each buyer, her utility is non-negative when she reports her valuation truthfully, and for each intermediate node, her utility is non-negative no matter to whom she shares the sale information and what the others do, i.e., she is not forced to share the sale information to her neighbours.

\begin{defn}
	A mechanism $(\pi,x)$ is \textit{individually rational} (IR) if $u_i(t_i, (b_i, t'_{-i}), (\pi,x)) \geq 0$ for all buyer $i$ and $u_i(t_i, (r'_i, t''_{-i}), (\pi,x)) \geq 0$ for all intermediate node $i$, where $(r'_i, t''_{-i})$ is the feasible type report profile when $i$ reports $r_i'\in \mathcal{P}({r_i})$.
\end{defn}

Incentive compatibility in our setting means that for all buyers and intermediate nodes, reporting valuation truthfully and spreading the auction information to all neighbours in the economic network is a dominant strategy.
\begin{defn}
	A mechanism $(\pi, x)$ is \textit{incentive compatible} (IC) if
	$u_i(t_i, (t_i,t_{-i}^\prime), (\pi,x)) \geq u_i(t_i, (t_i^\prime, t_{-i}^{\prime\prime}), (\pi,x))$ for all $i\in N$, where $(t_i^\prime, t_{-i}^{\prime\prime})$ is the feasible type report profile when $i$ reports $t_i'$.
\end{defn}
Given a feasible type report profile $t^\prime$ and a mechanism $\mathcal{M} = (\pi, x)$, the seller's \textit{revenue} generated by $\mathcal{M}$ is defined by the sum of all agents' payments, denoted by $Rev^{\mathcal{M}}(t^\prime) = \sum_{i\in N} x_i(t^\prime)$.

\begin{defn}
	A mechanism $(\pi, x)$ is \textit{weakly budget balanced} if for all feasible type report profile $t'\in T$, $Rev^{\mathcal{M}}(t') \geq 0$.
\end{defn}

In rest of this paper, we design mechanisms that satisfy above properties. In order to economize on notations, let $v_i(t'_i,\pi(t'))$, $W^*(\pi(t'))$ be $v_i(t'_i,\pi)$, $W^*(t')$ respectively.

\section{Information Diffusion Mechanism with Transaction Costs}
The popular VCG mechanism \cite{vickrey1961counterspeculation,clarke1971multipart,groves1973incentives} can directly apply in our setting. In fact, it is IR, IC and efficient. One drawback of the VCG mechanism is that it is not weakly budget balanced and the deficit it suffers could be linear in $|N|$$-$the number of agents. The proof is trivial and we omit it here. In this section, we first derive ideas from information diffusion mechanism (IDM) proposed in \cite{li2017mechanism} and extends it to economic networks with costs.
In following we show that the extended mechanism performs well when the underlying economic network forms a tree structure but fails to be truthful in general scenarios. 

We say agent $i$ is agent $j$'s \textit{diffusion critical agent} if all the trading chains started from seller $s$ to $j$ have to pass $i$. Intuitively, if $i$ is $j$'s diffusion critical agent under type report profile $t'$, $j$ will not be able to receive the sale information if $i$ has not joined the auction. Denote $d_i$ as the set of all agents who share $i$ as their diffusion critical agent. Let $-d_i=N\setminus d_i$ and $t'_{-d_i}$ denote the type report profile of agents in $-d_i$. In Figure $1(a)$, agent $E$ is the shared diffusion critical agent of agent $H$, $G$ and herself, therefore we have $d_E=\{E,G,H\}$. 

\begin{defn}\label{Diffusion_Critical_Sequence}
	Given the agents' type report profile $t^\prime$, for each $i\in N$, define $C_i=\{s_1,s_2,\cdots,s_k,i\}$ as the \textit{diffusion critical sequence} of $i$, which is an ordered set of all $i$'s diffusion critical agents and the order is determined by the relation $d_{s_1} \supset d_{s_2} \supset ,\cdots, \supset d_{s_k}\supset d_{i} $.
\end{defn}


An agent in the sequence can only receive the sale information if all agents ordered above the agent have received it. It is worth noting that agent $i$'s diffusion critical sequence is the intersection set of all trading chains from the seller to $i$. Now we give our mechanism based on the notion of diffusion critical sequence.


\begin{framed}
	\noindent\textbf{Information Diffusion Mechanism with Transaction Costs (IDM-TC)}\\
	\rule{\textwidth}{0.5pt}
	\begin{itemize}
		\item Given a feasible type report profile $t'$, allocate the commodity to buyer $m={\arg\max}_{i\in N}SW_i$ and choose $LCC_m$ as the trading chain (i.e., efficient allocation with random tie-breaking). Denote the ordered set $\{1,2,\cdots,m-1,m\}$ by $C_m$.
		\item The payment policy for each agent is given as:
		\begin{small}
		\begin{equation*}\label{midm_payment}
		\begin{cases}
		W^*_{-d_i}  - W^*_{-d_{i+1}} - c_i & \text{if $i \in C_m\setminus\{m\}$,}\\
		-c_i & \text{if $i \in LCC_m\setminus C_m$,}\\
		W^*_{-m} +  \sum_{j\in LCC_m\setminus \{m\}} c_j & \text{if $i=m$,} \\
		0 & \text{otherwise,} 
		\end{cases}
		\end{equation*} 
		\end{small}
	where $W^*_{-x}$ denotes $W^*(t'_{-x})$ for short.
	\end{itemize}
\end{framed}

IDM-TC generalizes the ideas of IDM such that only the diffusion critical agents of the winner may have a positive utility. Other non-diffusion-critical agents who are assigned to the trading chain from seller $s$ to the winner only get a payoff to cover their costs. To ensure the revenue of the seller, the transaction costs of the trade are covered by the winner.

We give a running example to show how IDM-TC works in Figure $1(a)$. According to the allocation policy of IDM-TC, buyer $H$ wins the commodity since $H={\arg \max}_{i\in N} SW_i$. Buyer $H$ can't join the auction without the participation of $A$ or $E$, therefore $H$'s diffusion critical agents are $\{A,E,H\}$ and $C_H=\{A,E,H\}$ which is the same as $LCC_H$. According to the payment policy, winner $H$ pays $W^*_{-H}+(c_A+c_E)=8+(0+1)=9$. Intermediate node $E$ pays $W^*_{-d_E}-W^*_{-d_H}-c_E=-2$. In a similar way, $A$ pays $-3$ and all other agents pay zero. Finally, the revenue of seller $s$ is $9+(-2)+(-3)=4$. One can verify that all agents in Figure $1(a)$ will act truthfully, that is all buyers will report their true valuations and all intermediate nodes will share the sale information to all their neighbours once they are aware of the auction. In the following, we show that IDM-TC performs well when the underlying economic network forms a tree structure.

\begin{prop}\label{didm_tree}
	IDM-TC is efficient, individually rational, incentive compatible and weakly budget balanced if the economic network forms a tree structure. The revenue of the seller given by IDM-TC is at least the revenue given by the Vickrey auction.
\end{prop}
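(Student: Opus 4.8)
The plan is to establish the four properties and the revenue bound one at a time, leaning throughout on a single structural fact that is special to trees: since a tree has a unique simple path from $s$ to each agent $i$, every node on that path must be traversed to reach $i$ and is therefore a diffusion critical agent of $i$. Hence the diffusion critical sequence and the lowest-cost chain coincide, $C_i=LCC_i$, the nesting $d_{s_1}\supset\cdots\supset d_i$ is exactly the order along the path, and in particular $LCC_m\setminus C_m=\emptyset$, so the second branch of the payment rule is vacuous and every agent on the winning chain is critical. I would also record the monotonicity that deleting agents only shrinks the set of feasible allocations, so $W^*_{-d_i}\le W^*_{-d_{i+1}}$ whenever $d_i\supset d_{i+1}$. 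Efficiency is then immediate: any feasible allocation with winner $i$ has welfare at most $SW_i$ (the maximum, attained by routing through $LCC_i$), and IDM-TC selects $m=\arg\max_i SW_i$, so it maximizes $W$ over all feasible allocations for the reported profile.

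For individual rationality I would compute utilities directly. The winner's utility is $b_m-x_m=SW_m-W^*_{-m}$; deleting the leaf buyer $m$ leaves every other chain intact, so $W^*_{-m}$ is the best welfare attainable without $m$, which is at most $SW_m$ because $m$ was the selected, welfare-maximizing buyer, making the utility nonnegative. A critical agent $i\in C_m\setminus\{m\}$ has value $-c_i$ and payment $W^*_{-d_i}-W^*_{-d_{i+1}}-c_i$, so her utility is $W^*_{-d_{i+1}}-W^*_{-d_i}\ge 0$ by the monotonicity above, while any agent off the winning chain has value and payment $0$. As these exhaust the outcomes an intermediate node can face under any report, IR holds.

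For weak budget balance and the Vickrey comparison I would telescope the payments. Because $LCC_m=C_m$, the chain costs inside $x_m$ cancel the $-c_i$ terms of the critical agents, and $\sum_{i=1}^{m-1}(W^*_{-d_i}-W^*_{-d_{i+1}})=W^*_{-d_1}-W^*_{-m}$, leaving the identity $Rev=W^*_{-d_1}$. The empty allocation is always feasible so $W^*\ge 0$ everywhere, hence $Rev=W^*_{-d_1}\ge 0$. For the Vickrey bound, observe that the first critical agent is a neighbour of $s$, so when the winner is not itself a neighbour of $s$ every direct buyer of $s$ survives in $-d_1$ and contributes its full valuation with no routing cost; thus $W^*_{-d_1}$ is at least the highest direct valuation, \emph{a fortiori} at least the second-highest, which is exactly the revenue of the second-price auction the seller could run on her directly informed customers. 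The degenerate case in which the winner is a neighbour of $s$ gives $C_m=\{m\}$ and $Rev=W^*_{-m}$, which dominates the second-highest direct valuation by the same observation.

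The main obstacle is incentive compatibility, and within it the diffusion report of intermediate nodes. For buyers the argument is the familiar monotonicity-plus-threshold one: the winner's payment $W^*_{-m}+\sum_{j\in LCC_m\setminus\{m\}}c_j$ is independent of her own bid $b_m$ and is exactly the threshold valuation above which $i$ wins, so reporting truthfully is optimal. For an intermediate node $i$ the only admissible deviation is to report some $r_i'\subsetneq r_i$, which in a tree prunes the subtrees hanging off the hidden neighbours; fixing $t'_{-i}$, I would compare utilities case by case. If the truthful winner lies outside $i$'s subtree, pruning creates no better buyer, so the winner and $i$'s utility of $0$ are unchanged. If the truthful winner is in $i$'s subtree, $i$ earns $W^*_{-d_{i+1}}-W^*_{-d_i}$, and since $-d_i$ excludes $i$'s whole subtree the term $W^*_{-d_i}$ is untouched by her report; it therefore suffices to show no deviation raises $W^*_{-d_{i+1}}$. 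Pruning a side branch only removes agents from $-d_{i+1}$ and so lowers it; pruning the branch toward the winner either drops $i$ off the chain, giving utility $0\le W^*_{-d_{i+1}}-W^*_{-d_i}$, or installs a new successor $q$, in which case the post-deviation feasible set for $W^*_{-d_q}$ is contained in the truthful feasible set for $W^*_{-d_{i+1}}$ --- the new winner's branch was already available before the deviation, whereas the deviation additionally deletes the hidden branches --- so the reward again cannot increase. This last set-inclusion comparison is the step I expect to demand the most care, because the identity of the successor and the two networks over which the $W^*$ terms are optimized change together with the deviation.
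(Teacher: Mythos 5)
Your proposal is correct and follows essentially the same route as the paper's (much terser) sketch: the identification $C_m=LCC_m$ in trees, the independence of $W^*_{-d_i}$ from $i$'s report, the telescoping identity $Rev=W^*_{-d_1}$, and the observation that $r_s\setminus\{1\}$ survives in $-d_1$ for the Vickrey bound. The set-inclusion argument you give for the case where a deviation installs a new successor $q$ is exactly the step the paper compresses into ``$W_{-d_{i+1}}$ is maximized'' by full sharing, and you fill it in correctly.
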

\begin{proof}
	It is obvious that IDM-TC is efficient according to the allocation policy. Next, we give a sketch for proving IC and IR properties. 
	
	For winner $m$, a higher bid does not change her utility ($\geq 0$) since she still wins. If she bids a lower value such that she loses the commodity, her utility becomes zero. For buyer $i$ who is not the winner, it is no good for her to become a winner by bidding higher, otherwise she will get a negative utility since $SW_i \le SW_m\subseteq -d_i$. Therefore it is a buyer's best strategy to bid truthfully. For any intermediate node $i\notin C_m$, no matter to whom she shares the sale information to, $LCC_m$ is not changed since $LCC_m=C_m\subseteq -d_i$ when the economic network forms a tree structure. This means $i$ is still not in $LCC_m$ and her utility is always zero. If $i\in C_m$, since $W_{-d_i}^*$ is independent of her, $i$'s utility is maximized by spreading the auction information to all her neighbours in which case $W_{-d_{i+1}}$ is maximized. Therefore fully spreading the sale information is the best reply for intermediates nodes. 
	
	The revenue of seller is $\sum_{i\in N} x_i(t')=W^*_{-d_1}\geq 0$. Since $r_s\subseteq -d_1\cup\{1\}$, we have $W^*_{-d_1}\geq b_{r_s}^2$ where $b_{r_s}^2$ is the second highest bid in $r_s$ (i.e., the revenue given by the Vickrey auction).
\end{proof}

Although IDM-TC is powerful when the economic network forms a tree structure, it fails to be truthful in general cases. For example, the winner is still buyer $H$ if there is an edge between intermediate nodes $B$ and $E$ in Figure 1(a). However, intermediate node $A$ is no longer a diffusion critical agent of $H$ and she can only obtain a payment of $c_A$ which covers her cost only. To achieve a higher revenue, intermediate node $A$ can refuse to share the sale information to intermediate node $E$. In this case, the winner becomes $D$ and intermediate node $A$ becomes $D$'s diffusion critical agent. Eventually, she will get a payoff of $1$ except her cost according to the payment policy.


The main reason IDM-TC fails to be truthful is because non-diffusion-critical agents of the winner have influence on allocation of the mechanism, which does not happen if there are no transaction costs. The utility change induces more complicate propagation strategies for intermediate nodes and are not fully embodied in IDM-TC. To compensate intermediate nodes that incur costs and incentivize them to share the sale information to all their neighbours in general economic networks, we will develop new mechanism in the next section.

\section{Customer Sharing Mechanism}
In this section, we design a novel efficient mechanism that is IR, IC and weakly budget balanced in general economic networks. This mechanism, called customer sharing mechanism, captures the contributions of every agent in an economic network, 
which is different from IDM-TC that only focuses on the diffusion critical agents of the winner. Before introducing our mechanism, we first define an edge-related set for each intermediate node which plays a vital role in our mechanism.

\begin{defn}\label{minimum_set}
	 Given the agents' type report profile $t'$, for each intermediate node $i\in N$, define $i$'s \textit{threshold neighbourhood} ${{r_i^*}'}$ as the minimum subset of $r_i'$ that makes the winner under efficient allocation changed if $i$ does not share the sale information to ${r_i^*}'$, i.e., ${r_i^*}'=\arg \min_{r_i''\subseteq r_i'} \{|r_i''|\}$ where $\pi_{m'}(r_i'\setminus r_i'', t_{-i}'')=1 \land \pi_m(r_i', t_{-i}')=1 \land m'\neq m$ and $\pi$ is an efficient allocation.
\end{defn} 

Take intermediate node $A$ in Figure $1(b)$ for an example, $r_A^*=\{E\}\subset r_A=\{C,D,E\}$ is the threshold neighbourhood of $A$ since the winner changes to buyer $D$ if $A$ does not share the sale information to intermediate node $E$. If $i$ cannot change the winner through any diffusion strategies, then set ${{r_i^*}'} = r_i'$. Next, we show that the threshold neighbourhood of an intermediate node is well defined. For any intermediate node $i$ that does not belong to $LCC_m$, it is clear that ${r_i^*}'=r_i'$ since $LCC_m$ cannot be affected by $i$'s diffusion strategies. For $i\in LCC_m$, we can construct her threshold neighbourhood in the following way:
\begin{itemize}
	\item[1.] initialize ${r_i^*}'=\{\emptyset\}$ and let $m$ be the winner in efficient allocation.
	\item[2.] add $i$'s neighbour $j\in LCC_m$ to ${r_i^*}'$ and remove the edge $(i,j)$ from the network and move to step 3.
	\item[3.] if $m'=m$ and $i\in LCC_{m}$ where $m'$ is the winner in the new network, then back to step 2; if $m'=m$ and $i\notin LCC_m$, then return $i$'s threshold neighbourhood  as $r_i'$ and terminate this procedure; otherwise, finish the procedure and return ${r_i^*}'$ as $i$'s threshold neighbourhood.
\end{itemize}

Clearly, for any $j\in {r_i^*}'$, the order of deleting edge $(i,j)$ does not affect the outcome of above procedure since the lowest cost trading chain from $k(\neq j) \in {r_i^*}'$ to $m$ is not affected. Furthermore, the procedure adds only one $i$'s threshold neighbour in step 2. Therefore the outcome is unique and the output threshold neighbourhood is minimized.
Based on the notion of threshold neighbourhood, we give our mechanism in what follows.

\begin{framed}
	\noindent\textbf{Customer Sharing Mechanism (CSM)}\\
	\rule{\textwidth}{0.5pt}
	\begin{itemize}
		\item \textbf{Allocation policy:} Given a feasible type report profile $t'$, allocate the commodity to buyer $m={\arg\max}_{j\in N}SW_j$ and trade the commodity along $LCC_m$ (with random tie-breaking).
		\item \textbf{Payment policy:} The payment policy is defined for each category of agents as follows.
		\begin{itemize}
		\item for buyer $i\in N$, her payment is defined as:
		\begin{equation*}
			W^*_{-i}  - W^*(t') + v_i(t'_i,\pi^{csm}).
		\end{equation*}
		\item for an intermediate node $i$, her payment is:
		\begin{equation*}
			W^*_{{-d_{i}}}  - W^*_{-{r_i^*}'} + v_i(t'_i,\pi^{csm}),
		\end{equation*}
		where $W^*_{-{r_i^*}'}$ denotes the maximum social welfare under feasible type report profile $(r_i'\setminus{r_i^*}',t_{-i}'')$.
		\end{itemize}

	\end{itemize}
\end{framed}
Intuitively, in the customer sharing mechanism a buyer pays the VCG payment which equals to the social welfare decrease of the others caused by her participation. Each intermediate node's payment is correlated with her threshold neighbourhood. The peculiar structure in an intermediate node's payment motivates her to share the sale information to all her neighbours. At the same time, seller's revenue is also guaranteed. Next, we prove that CSM satisfies all properties we expect.

\begin{theorem}
	CSM is efficient, individually rational and incentive compatible.
\end{theorem}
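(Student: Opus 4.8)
The plan is to treat the three properties in order of difficulty, in each case collapsing an agent's utility into a difference of two optimal welfares and then exploiting that welfare is monotone in the information made available to the mechanism. Efficiency is immediate, since the allocation rule picks $m=\arg\max_{j}SW_j$ and trades along $LCC_m$, which is exactly the efficient allocation characterised right after the definition of efficiency. To obtain the other two properties I would first simplify the utilities. For a buyer $i$ the payment is the Clarke pivot term, so with a truthful bid the two copies of $v_i$ cancel and $u_i=W^*(t')-W^*_{-i}\ge 0$, because any allocation feasible without $i$ is still feasible with $i$. For an intermediate node $i$ the valuation depends only on whether $i$ lies on the chosen chain and always uses the true cost $c_i$, so $v_i(t_i,\pi)=v_i(t'_i,\pi)$; the two $v_i$ terms in her payment again cancel and $u_i=W^*_{-{r_i^*}'}-W^*_{-d_i}$. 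For individual rationality I would show $W^*_{-{r_i^*}'}\ge W^*_{-d_i}$ by a feasible-set containment: in the profile $t'_{-d_i}$ every trading chain avoids $i$ (the agents of $d_i$, including $i$, are reachable only through $i$), and each such chain survives unchanged in $(r_i'\setminus{r_i^*}',t''_{-i})$, where $i$ is still present and the agents of $-d_i$ keep their reports; hence the maximum over the latter profile dominates.

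Incentive compatibility for buyers is the textbook VCG argument: rewriting $u_i=\bigl[v_i(t_i,\pi(t'))+\sum_{j\ne i}v_j(t'_j,\pi(t'))\bigr]-W^*_{-i}$, the bracket is the \emph{true} social welfare of the reported-efficient allocation, which a truthful bid maximises (then $\pi(t')$ is efficient for the true profile), while $W^*_{-i}$ does not depend on $i$'s bid. The real work is IC for intermediate nodes, where truthful means sharing to all of $r_i$. Here I would first observe that $W^*_{-d_i}$ is independent of $i$'s report, because $-d_i$ is precisely the set of agents reachable by chains avoiding $i$ and those chains use no edge that $i$ controls; call this constant $K$. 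Thus for every report $r_i'\subseteq r_i$ her utility equals $W^*_{-{r_i^*}'}-K$, and it suffices to prove that the fallback welfare $r_i'\mapsto W^*_{-{r_i^*}'}$ is maximised at $r_i'=r_i$.

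This monotonicity is the main obstacle. Writing $\phi(\sigma)$ and $\mu(\sigma)$ for the maximal welfare and efficient winner when $i$ shares exactly $\sigma$ (every other agent informed through $\sigma$ reporting truthfully), $\phi$ is monotone nondecreasing because enlarging $\sigma$ only adds feasible chains. By minimality of the threshold neighbourhood, $W^*_{-{r_i^*}'}$ at report $r_i'$ equals $\phi(r_i'\setminus{r_i^*}')$, and $r_i'\setminus{r_i^*}'$ is the largest subset of $r_i'$ whose winner differs from $\mu(r_i')$. Let $S_1:=r_i\setminus{r_i^*}'$ be this set under full sharing (so $\mu(S_1)\ne\mu(r_i)=m$) and $S_0:=r_i'\setminus{r_i^*}'$ under a deviation $r_i'\subseteq r_i$. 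I would first check $\mu(S_0)\ne m$: otherwise $m$ would be reachable through $S_0\subseteq r_i'$, forcing $\mu(r_i')=m$ and contradicting $\mu(S_0)\ne\mu(r_i')$. The crux is then the claim that every $\sigma\subseteq r_i$ with $\mu(\sigma)\ne m$ satisfies $\sigma\subseteq S_1$: if $\sigma\not\subseteq S_1$ then $\sigma\cup S_1\supsetneq S_1$, so $r_i\setminus(\sigma\cup S_1)$ is a proper subset of the threshold and minimality gives $\mu(\sigma\cup S_1)=m$; since a trading chain is a simple path it leaves $i$ through a single successor edge $i\to j^*$ with $j^*\in\sigma\cup S_1$, and $j^*\in S_1$ would make $m$ reachable through $S_1$ (the chain beyond $j^*$ is unaffected by $i$), contradicting $\mu(S_1)\ne m$, whence $j^*\in\sigma$ makes $m$ reachable through $\sigma$ and contradicts $\mu(\sigma)\ne m$. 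Applying the claim to $S_0$ yields $S_0\subseteq S_1$, so monotonicity of $\phi$ gives $\phi(S_0)\le\phi(S_1)$, i.e.\ $W^*_{-{r_i^*}'}$ is maximised at full sharing. The points I expect to need the most care are the feasibility bookkeeping when $i$ shrinks her report, so that the monotonicity of $\phi$ and the reuse of ``the chain beyond $j^*$'' are legitimate once the newly uninformed downstream agents are accounted for, and the degenerate case in which $i$ can never change the winner, where the convention ${r_i^*}'=r_i'$ forces $u_i=0$ and the inequality holds trivially.
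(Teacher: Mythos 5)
Your overall architecture coincides with the paper's: efficiency from the allocation rule; buyers via the Clarke pivot argument with the two valuation terms cancelling; intermediate nodes reduced, after cancelling $v_i$ and noting $W^*_{-d_i}$ is report-independent, to showing the fallback welfare $W^*_{-{r_i^*}'}$ is maximised at full sharing, which both you and the paper obtain from a containment of the surviving share-sets ($S_0\subseteq S_1$ in your notation, $-{r_i^*}'\subseteq -r_i^*$ in the paper's) plus monotonicity of optimal welfare in the informed set. Your ``crux claim'' (every $\sigma\subseteq r_i$ with $\mu(\sigma)\neq m$ satisfies $\sigma\subseteq S_1$, proved via cardinality-minimality of $r_i^*$ together with the fact that a simple trading chain leaves $i$ through a single out-edge) is a welcome explicit justification of a containment the paper merely asserts, leaning on its constructive uniqueness discussion of the threshold neighbourhood.

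However, one step of yours is wrong as stated: the preliminary check that $\mu(S_0)\neq m$. The justification ``otherwise $m$ would be reachable through $S_0\subseteq r_i'$, forcing $\mu(r_i')=m$'' is invalid --- reachability of $m$ does not force $m$ to win, since the efficient winner is determined by welfare comparisons --- and the asserted fact is false in general. Note that under a deviation $r_i'$ the threshold ${r_i^*}'$ is defined relative to the \emph{new} winner $\hat m=\mu(r_i')$, and removing it may well restore the original winner $m$. Concretely: let $r_i=\{a,b\}$; buyer $m$ is reachable avoiding $i$ with welfare $10$, via $i\to a$ with welfare $15$; buyer $m'$ is reachable only via $i\to b$ with welfare $12$. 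Full sharing gives winner $m$, $r_i^*=\{a\}$, $S_1=\{b\}$; the deviation $r_i'=\{b\}$ gives winner $m'$, ${r_i^*}'=\{b\}$, $S_0=\emptyset$, and $\mu(S_0)=m$. (Here $S_0\subseteq S_1$ still holds, so this breaks your intermediate claim, not incentive compatibility itself.) The repair lies entirely within your own framework and reproduces the paper's case split: if the deviation does \emph{not} change the winner, then $\mu(S_0)\neq\mu(r_i')=m$ by definition of ${r_i^*}'$ and your claim applies to $S_0$; if the deviation \emph{does} change the winner, apply your crux claim to $\sigma=r_i'$ itself to get $r_i'\subseteq S_1$, whence $S_0\subseteq r_i'\subseteq S_1$. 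With that patch, monotonicity of $\phi$ finishes the argument and your proof is correct, matching the paper's while filling in its tersest step.
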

\begin{proof}
	CSM is efficient according to the allocation policy. Subsequently, we prove that CSM is individually rational and incentive compatible by two steps as follows.
	
	\item 1. For any buyer $i$, if she reports $b_i$ truthfully, i.e., $b_i' = b_i$, then $u_i(t_i,t',(\pi,x))=W(t')^*-W_{-i}^*$. Since $W(t')^*$ is the optimal social welfare under the constraints that $i$ cannot influence, if $i$ can misreport $b_i'$ to change the allocation to increase $v_i(t_i,\pi^{csm}) + (W(t')^*-v_i(t'_i,\pi^{csm}))$, then it contradicts that $W(t')^*$ is the optimal social welfare. Furthermore, $W_{-i}^*$ is independent of $i$ and we have $W(t')^*\geq W_{-i}^*$. Therefore, $i$'s utility is maximised when she reports $b_i'$ truthfully.
	
	
	\item 2. For any intermediate node $i$, we show that sharing the sale information to all her neighbours maximizes her utility. If intermediate node $i$ doesn't belong to $LCC_m$, then we have $v_i(t_i,\pi^{csm})=v_i(t'_i,\pi^{csm})=0$. Note that $LCC_m\subseteq -d_{i}\subseteq -{r^*_i}'$, then $W^*_{{-d_{i}}} = W^*_{-{r^*_i}'}=SW_m=W^*(t')$. Thus her utility is zero and cannot be affected by her sharing strategies since $i\notin -d_i$. For intermediate node $i\in LCC_m$, her utility is $W^*_{-{r^*_i}'}-W^*_{{-d_{i}}}$ by reason of $v_i(t_i,\pi^{csm})=v_i(t'_i,\pi^{csm})=-c_i$, which is no less than zero because of $-d_i\subseteq{-{r^*_i}'}$. Firstly, the latter term $W^*_{{-d_{i}}}$ is independent of $r_i'$ since $i\notin -d_i$ and we only consider how the former term is affected by different sharing strategies of intermediate node $i$. When intermediate node $i$ misreports her type as $r_i'\subseteq r_i$, there are two possible outcomes. If the winner has changed to another buyer, then we must have $r_i^*\subset r_i\setminus r_i'$ since $r_i^*$ is the minimum subset of $r_i$ that makes the current winner changed. Notice that the threshold neighbourhood ${r^*_i}'$ is the set of agents to whom intermediate node $i$ does not share the sale information when she reports $r_i'\subseteq r_i$, therefore we get $-{r_i^*}'\subset -r_i^*$. If the winner is not changed, then we have $r_i^*\subseteq ({r_i^*}'\cup (r_i\setminus r_i'))$ which also means that $-{r_i^*}'\subset -r_i^*$. In either case, we have $-{r_i^*}'\subseteq -r_i^*$ and $W^*_{-{r^*_i}'}\leq W^*_{-r^*_i}$. Thus sharing the sale information to all neighbours maximizes an intermediate node's utility. 
	
	Therefore, CSM is individually rational and incentive compatible, that is biding truthfully and spreading the sale information to all their neighbours is a dominant strategy for all agents in the economic network.
\end{proof}

The reason of truthfulness of an intermediate node is due to the fact that the more neighbours an intermediate node shares the sale information to, the smaller her threshold neighbourhood will become. Therefore, her utility is maximized when she shares the sale information to all her neighbours because $W^*_{-{r^*_i}'}$ is maximized.

It is worth noting that the identification of threshold neighbourhood is a key component in the proof of incentive compatibility. Furthermore, the threshold neighbourhood also plays a crucial role in the budget balanced property of CSM.


\begin{lemma}\label{change_winner}
	Given a feasible type report profile $t'$, assume buyer $m$ wins the commodity and $LCC_m=\{l_1,l_2,\cdots,l_q=m\}$. For any intermediate node $l_i \in LCC_m$, if the winner becomes $m'\neq m$ when $l_i$'s sharing strategy changes from $r_{l_i}'$ to $r_{l_i}'\setminus{r_{l_i}^*}'$, then $\{l_{i+1},l_{i+2},\cdots,l_q\}\cap LCC_{m'}=\emptyset$.
\end{lemma}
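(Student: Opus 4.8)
The plan is to argue by contradiction, exhibiting a chain to the old winner $m$ that survives in the modified network and is at least as good as the new winner, thereby contradicting that the winner changed. Write $G$ for the reported network in which $l_i$ shares to all of $r_{l_i}'$ (winner $m$, chain $LCC_m=(l_1,\dots,l_q=m)$) and $G'$ for the network after $l_i$ withholds sharing to ${r_{l_i}^*}'$ (winner $m'\neq m$). Since ${r_{l_i}^*}'\subseteq r_{l_i}'$, passing from $G$ to $G'$ only deletes edges incident to $l_i$, so every trading chain of $G'$ is also a trading chain of $G$. Suppose, for contradiction, that $LCC_{m'}\cap\{l_{i+1},\dots,l_q\}\neq\emptyset$, and let $l_j$ be the member of largest index. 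Because $l_q=m$ is a buyer it cannot occupy an interior position of any chain, so $l_q\notin LCC_{m'}$ and hence $i<j<q$; by maximality of $j$ we also have $l_{j+1},\dots,l_q\notin LCC_{m'}$.

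The first step is to reconnect a chain to $m$ inside $G'$. I would split $LCC_{m'}$ at $l_j$ into its prefix $\alpha$ from $s$ to $l_j$ and its suffix from $l_j$ to $m'$, and then set $\gamma=\alpha\cdot(l_{j+1},\dots,l_q=m)$. The appended edges $l_k l_{k+1}$ with $k\ge j$ are not incident to $l_i$ (their indices exceed $i$), so they were not deleted and still lie in $G'$; and the appended nodes $l_{j+1},\dots,l_q$ are absent from $LCC_{m'}$, so no vertex repeats. Hence $\gamma$ is a \emph{simple} trading chain from $s$ to $m$ in $G'$, and in particular $m$ remains reachable after the withholding.

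The heart of the argument is a welfare comparison. Reading off interior costs, $\gamma$ and $LCC_{m'}$ share the entire prefix $\alpha$ (including $l_j$) and differ only in their tails: the welfare $b_m-\sum_{\text{interior}(\gamma)}c_k$ of serving $m$ along $\gamma$ exceeds the welfare of serving $m'$ in $G'$ exactly when $b_m-b_{m'}\ge (c_{l_{j+1}}+\dots+c_{l_{q-1}})-(c_{a_{t+1}}+\dots+c_{a_{p-1}})$, where the $a$'s denote the interior of the suffix of $LCC_{m'}$. To establish this inequality I would invoke efficiency in $G$: reaching $l_j$ along the prefix $(l_1,\dots,l_j)$ of $LCC_m$ and then following the suffix of $LCC_{m'}$ from $l_j$ to $m'$ yields a walk to $m'$ in $G$; pruning any repeated vertices is harmless because the $c_k$ are nonnegative, so this produces a genuine chain to $m'$ in $G$. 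Then $SW_m\ge SW_{m'}$ in $G$, after cancelling the common prefix cost and the term $c_{l_j}$, is precisely the displayed inequality, giving $W(\gamma)\ge SW_{m'}$ in $G'$.

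Combining the two steps, $\gamma$ is a chain to $m$ in $G'$ whose welfare is at least that of the efficient winner $m'$ in $G'$; since $m'$ is efficient this forces equality, so $m$ is itself an efficient winner in $G'$, contradicting the premise that withholding ${r_{l_i}^*}'$ changes the winner to $m'\neq m$. The steps I expect to require the most care are the bookkeeping that keeps the two spliced chains simple (and the explicit use of $c_k\ge 0$ when pruning repeated vertices), together with the tie-breaking convention needed to convert the final welfare equality into a genuine contradiction with the winner having changed.
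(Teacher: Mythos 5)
Your proposal is correct and takes essentially the same route as the paper: both argue by contradiction via an exchange at a crossing node $l_j \in \{l_{i+1},\dots,l_q\}\cap LCC_{m'}$, splicing the prefix/suffix of the two chains at $l_j$ and comparing welfares after cancelling the common part --- the paper phrases this compactly with lowest transaction costs $x,y,z$ and contradicts efficiency under $t'$, while you run the same inequality in the reverse direction (more carefully, with explicit simple-chain constructions and pruning) to show $m$ remains efficient after the withholding. The tie-breaking caveat you flag is real but is equally present in the paper's own proof, which asserts the strict inequality $b_{m'}-z-x>b_m-y-x$ where random tie-breaking only licenses a weak one.
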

\begin{proof}
	Assume there is an intermediate node $l_j\in\{l_{i+1},l_{i+2},\cdots,l_q\}\cap LCC_{m'}$. Denote the lowest transaction costs from seller $s$ to $l_j$ as $x$, $l_j$ to buyer $m$ as $y$ and $l_j$ to buyer $m'$ as $z$ under type report profile $(r_{l_i}'\setminus{r_{l_i}^*}',t_{-{l_i}}'')$. Since $m'$ is the new winner, we have $b_{m'}-z-x>b_m-y-x$ which means $b_{m'}-z>b_m-y$ which contradicts the fact that $LCC_m$ is the trading chain with lowest transaction costs from seller $s$ to buyer $m$ under type report profile $t'$.
\end{proof}
It is straightforward that intermediate node $i$'s utility is zero if her diffusion strategies cannot change the current winner according to the payment policy of CSM, in which case ${r_i^*}'=r_i'$. And for the intermediate nodes who can change current winner to another one, $\{l_{i+1},l_{i+2},\cdots,l_q\}\cap LCC_{m'}=\emptyset$ means that $LCC_{m'}$ is independent of the agent $l_k\in \{l_{i+1},l_{i+2},\cdots,l_q\}$. Therefore we have $LCC_{m'}\subseteq -d_{l_k}$ which leads to $W^*_{-{r_i^*}'}=SW_{m'}= \sum_{i\in LCC_{m'}} v_i(t_i',\pi^{csm})\leq \sum_{i\in -d_{l_k}} v_i(t_i',\pi^{csm})=W^*_{{-d_{l_k}}}$. Based on this observation, we show CSM is budget balanced.

\begin{theorem}
	CSM is weakly budget balanced and the revenue of the seller is no less than $W^*_{-d_{1^*}}$ where $1^*$ is the first agent in $LCC_m$ with $r_{1^*}^*\neq r_{1^*}$.
\end{theorem}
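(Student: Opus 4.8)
The plan is to write the revenue $Rev^{\mathcal{M}}(t') = \sum_{i \in N} x_i(t')$ explicitly, collapse it to a telescoping sum over the winning chain, and then bound that sum using Lemma \ref{change_winner} and the remark following it. Write $LCC_m = \{l_1, \ldots, l_q = m\}$. First I would check that only the winner and the chain nodes $l_1, \ldots, l_{q-1}$ pay anything: a non-winning buyer has $v_i = 0$ and $W^*_{-i} = W^*(t')$, so she pays $0$, and a non-chain intermediate node has ${r_i^*}' = r_i'$ with $W^*_{-d_i} = W^*_{-{r_i^*}'} = W^*(t')$ (as already observed in the proof of Theorem 1), so she pays $0$ as well. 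Substituting the payment policy for the remaining agents and using $W^*(t') = b_m - \sum_{i=1}^{q-1} c_{l_i}$, the winner's term $W^*_{-m} - W^*(t') + b_m$ combines with the reimbursements $-c_{l_i}$ of the chain nodes; all occurrences of $b_m$ and $c_{l_i}$ cancel and leave
\begin{equation*}
Rev^{\mathcal{M}}(t') = W^*_{-m} + \sum_{i=1}^{q-1}\left(W^*_{-d_{l_i}} - W^*_{-{r_{l_i}^*}'}\right).
\end{equation*}
This reformulation is the crux; the remaining work is to bound each summand from below.

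Next I would partition the chain nodes according to whether they can change the winner. If $l_i$ cannot, then ${r_{l_i}^*}' = r_{l_i}'$ and I claim its summand is exactly $0$: when $l_i$ forwards to no one, precisely the agents of $d_{l_i}\setminus\{l_i\}$ become unreachable while $l_i$ is a dead end that neither wins nor relays, so the best attainable welfare is $W^*_{-d_{l_i}}$, i.e. $W^*_{-{r_{l_i}^*}'} = W^*_{-d_{l_i}}$ (note that $m$ remaining the winner forces $m \notin d_{l_i}$). Hence only the winner-changing nodes, listed in chain order as $l_{p_1} = 1^*, l_{p_2}, \ldots, l_{p_r}$, survive. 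For each $l_{p_j}$ the remark after Lemma \ref{change_winner} gives $W^*_{-{r_{l_{p_j}}^*}'} = SW_{m'} \le W^*_{-d_{l_k}}$ for every $l_k$ lying strictly after $l_{p_j}$ on the chain; taking $l_k = l_{p_{j+1}}$ for $j < r$ and $l_k = m$ for $j = r$ (where $W^*_{-d_m} = W^*_{-m}$) yields $W^*_{-d_{l_{p_j}}} - W^*_{-{r_{l_{p_j}}^*}'} \ge W^*_{-d_{l_{p_j}}} - W^*_{-d_{l_{p_{j+1}}}}$. Summing telescopes to $W^*_{-d_{1^*}} - W^*_{-m}$, so $Rev^{\mathcal{M}}(t') \ge W^*_{-m} + (W^*_{-d_{1^*}} - W^*_{-m}) = W^*_{-d_{1^*}}$; and since allocating to nobody is feasible and has welfare $0$ we have $W^*_{-d_{1^*}} \ge 0$, which establishes weak budget balance.

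The step I expect to be the main obstacle is showing that a chain node which cannot change the winner contributes exactly $0$, i.e. the identity $W^*_{-{r_{l_i}^*}'} = W^*_{-d_{l_i}}$. Individual rationality (from Theorem 1) only supplies the inequality $W^*_{-{r_{l_i}^*}'} \ge W^*_{-d_{l_i}}$, but the telescoping collapses to the single term $W^*_{-d_{1^*}}$ only if the reverse inequality holds too, which is why I must argue precisely that forwarding to nobody leaves exactly $N \setminus d_{l_i}$ as the effective market. A secondary care point is that the winner-changing nodes need not be consecutive on $LCC_m$, so the telescoping must be organised over the subsequence $l_{p_1}, \ldots, l_{p_r}$ and Lemma \ref{change_winner} must be applied to the next winner-changing node $l_{p_{j+1}}$ rather than to the immediate successor $l_{p_j + 1}$.
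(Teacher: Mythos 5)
Your proof is correct and takes essentially the same route as the paper's: reduce the revenue to the payments of the agents in $LCC_m$, cancel the valuations against $W^*(t')$ to get $Rev = W^*_{-m} + \sum_{i}(W^*_{-d_{l_i}} - W^*_{-{r_{l_i}^*}'})$, drop the non-winner-changing chain nodes, and telescope over the winner-changing subsequence using Lemma \ref{change_winner} via $LCC_{m'}\subseteq -d_{(i+1)^*}$ with the terminal convention $W^*_{-d_m}=W^*_{-m}$. The only deviations are that you explicitly justify the identity $W^*_{-{r_i^*}'}=W^*_{-d_i}$ for chain nodes that cannot change the winner (which the paper asserts without argument) and you conclude $W^*_{-d_{1^*}}\geq 0$ from the feasibility of the empty allocation, whereas the paper notes the slightly stronger bound $W^*_{-d_{1^*}}\geq b_{r_s}^2$; neither difference affects the validity of your argument for the stated claim.
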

\begin{proof}
	Denote the ordered sequence $LCC_m^*=\{1^*,2^*,\cdots,k^*\}\subseteq LCC_m$ as the set of intermediate nodes with $r_{i}^*\neq r_i$. For an intermediate node that doesn't belong to $LCC_m^*$, her payment is $v_i(t_i,\pi^{csm})$ because of $W_{{-d_{i}}}^{*}=W_{-r_i}^{*}=W_{-r_i^*}^{*}$. Therefore the revenue of seller $Rev^{csm}$ is given by:
	\begin{equation*}
	\begin{aligned}
	&\sum_{i\in N}{x_i(t)}=\sum_{i\in LCC_m}{x_i(t)}\\
	&=\sum_{i\in LCC_m^*}(W_{{-d_{i}}}^{*}  - W_{-r_i^*}^{*}) + W^{*}_{-m}  - W^*({t})\\
	&+ \sum_{i\in LCC_m}v_i(t_i,\pi^{csm})\\
	\end{aligned}
	\end{equation*}
	
	\begin{equation*}
	\begin{aligned}
	&=\sum_{i\in LCC_m^*}(W_{{-d_{i}}}^{*}  - W_{-r_i^*}^{*})+W^{*}_{-m}\\
	&=W_{{-d_{1^*}}}^{*}+\sum_{i\in LCC_m^*\setminus \{1^*\}}(- W_{-r_i^*}^{*}+W_{{-d_{i+1}}}^{*}),\\
	\end{aligned}
	\end{equation*}
	where $-d_{k^*+1}$ is the set of $-m$.
	According to Lemma \ref{change_winner}, we have that for any intermediate node $i^*\in LCC_m^*$, ${(i+1)}^*\notin LCC_{m'}$ where $m'$ is the winner under feasible type report profile $(r_{i^*}\setminus r_{i^*}^*,t'_{-i^*})$. This means that $LCC_{m'}\subseteq -d_{(i+1)^*}$. Therefore $W_{{-d_{j+1}}}^{*}\geq W_{-r_j^*}^{*}$ for all $j\in LCC_m^*\setminus\{1^*\}$, which leads to $\sum_{i\in LCC_m^*\setminus\{1^*\}}(- W_{-r_i^*}^{*}+W_{{-d_{i+1}}}^{*})\geq 0$. That is, $Rev^{csm}=W_{{-d_{1^*}}}^{*}+\delta \geq W_{{-d_{1^*}}}^{*} \geq b_{r_s}^2\geq 0$ where $\delta$ is a non-negative value.
\end{proof}
We run CSM in Figure $1(b)$ to end this section. By the allocation policy, identify the winner in Figure $1(b)$ as buyer $H={\arg\max}_{j\in N}SW_j$ with $LCC_H=\{A,E,H\}$. Then determine the threshold neighbourhood for each intermediate node in $LCC_H$: $r_A^*=\{E\}$ and $r_E^*=\{H\}$. According to the payment policy of CSM, $x_A=W^*_{{-d_{A}}}  - W^*_{-{r_A^*}} + v_i(t_i,\pi^{csm})=4-7+(-0)=-3$. In a similar way, we have $x_E=7-8+(-1)=-2$ and $x_H=8-9+10=9$. All other agents who don't belong to $LCC_H$ pay zero. Therefore, the revenue of the seller is $-3+(-2)+9=4$ which is four times of the revenue given by the Vickrey auction.
\section{Conclusions}

To increase seller's revenue and improve allocation efficiency in economic networks with costs, we design a novel auction mechanism called customer sharing mechanism. In this mechanism, all the sellers and infomediaries are incentivized to diffuse the auction information to all their neighbours and all the customers are incentivized to bid truthfully. As a consequence, all the sellers can share customers to each other and the commodity of each seller will be allocated globally, both the revenue and efficiency are significantly improved comparing with the Vickrey auction in which agents have no incentives to share information. 

One premise in the model was that the costs of intermediate nodes are assumed to be known. In fact, public costs widely exist in real-world. In e-markets, online platforms usually charge a fixed and public-known commission when dealing with a trade. In freight systems, the costs are consumed resources for shipping the commodity. When selling privacy datasets via a computer network, the costs are electricity consumption when handling the datasets. These aforementioned costs can be easily estimated and verified in practice. Additionally, when the costs are also private information for intermediate nodes, one cannot design mechanisms that satisfy all the good properties simultaneously. Consider a simple network: one seller, one intermediate node with a private cost and one customer with private valuation constitute a line. Since the intermediate node can affect the allocation of the commodity by changing her diffusion strategies, it is clear that we can reduce this setting to a bilateral trade. According to Myerson–Satterthwaite theorem \cite{Myerson1983Efficient}, no truthful, IR and efficient mechanism is budget balanced. 

There are several directions for future researches. In this paper, we give one solution for customer sharing problem, however, to maximize the seller's revenue, we need to characterize the conditions for all truthful mechanisms in the economic network. Moreover, the seller in our paper has only one commodity for sale, it is of great interest to design mechanisms for selling multiple commodities. Without considering costs, a generalized IDM is proposed for selling $k$ homogeneous items via social networks where each buyer wants at most one item \cite{Zhao2018Multi}. The proposed mechanism can also be further implemented by using techniques from Distributed Algorithmic Mechanism Design (DAMD) \cite{Feigenbaum2002DistributedAM,feigenbaum2005bgp}.

\section*{Acknowledgments}
This work was partially supported by the National Natural
Science Foundation of China (NNSFC) under Grant No.
71601029.
\bibliographystyle{named}
\bibliography{ijcai18csm}

\end{document}